\newcommand{\abs}[1]{\left\vert#1\right\vert}
\newcommand{\bra}[1]{\left\langle#1\right\vert}
\newcommand{\ket}[1]{\left\vert#1\right\rangle}
\newcommand{\braket}[2]{\left.\left\langle#1\right|#2\right\rangle}
\newcommand{\Tr}[1]{\mbox{Tr}[#1]}
\newcommand{\DStoA}{\overrightarrow{\cal D}_{SA}}
\newcommand{\DAtoS}{\overleftarrow{\cal D}_{SA}}
\newtheorem{theorem}{\bf{Theorem}}
\newtheorem{lemma}{\bf{Lemma}}
\newtheorem{corollary}{\bf{Corollary}}
\begin{document}

\title {Daemonic Ergotropy: Enhanced Work Extraction from Quantum Correlations}
\author{G.~Francica}
\affiliation{Dip.  Fisica, Universit\`a della Calabria, 87036
Arcavacata di Rende (CS), Italy} \affiliation{INFN - Gruppo
collegato di Cosenza, Cosenza Italy}

\author{J.~Goold}
\affiliation{International Centre for Theoretical Physics, Trieste Italy}
\author{M.~Paternostro}
\affiliation{Centre for Theoretical Atomic, Molecular and Optical Physics, School of Mathematics and Physics, Queen's University,
Belfast BT7 1NN, United Kingdom}
\author{F.~Plastina}
\affiliation{Dip.  Fisica, Universit\`a della Calabria, 87036
Arcavacata di Rende (CS), Italy} \affiliation{INFN - Gruppo
collegato di Cosenza, Cosenza Italy}

%%%%%%%%%%%%%%%%%%%%%%%%%%%%%%%%%%%%%%%%%%%%%%%%%%%%
\begin{abstract}
We investigate how the presence of quantum correlations can
influence work extraction in closed quantum systems, establishing
a new link between the field of quantum non-equilibrium
thermodynamics and the one of quantum information theory. We
consider a bipartite quantum system and we show that it is
possible to optimise the process of work extraction, thanks to the
correlations between the two parts of the system, by using an
appropriate feedback protocol based on the concept of ergotropy.
We prove that the maximum gain in the extracted work  is related
to the existence of quantum correlations between the two parts, quantified by either quantum discord or, for pure states, entanglement. We then  illustrate our general findings on a simple
physical situation consisting of a qubit system.
\end{abstract}
%%%%%%%%%%%%%%%%%%%%%%%%%%%%%%%%%%%%%%%%%%%%%%%%%%%%
\maketitle

The thermodynamic implications of
quantum dynamics are currently helping us build new architectures
for the super-efficient nano- and micro-engines, and design
protocols for the manipulation and management of work and heat
above and beyond the possibilities offered by merely classical
processes~\cite{review}. Exciting experimental progress towards
the achievement of such paramount goals is currently
ongoing~\cite{rossnagel}. Quantum coherences are believed to be
responsible for the extraction of work from a single heat
bath~\cite{scully}, while weakly driven quantum heat engines are
known to exhibit enhanced power outputs with respect to their
classical (stochastic) versions~\cite{uzdin}.

Despite such evidences, the identification of the specific
features of quantum systems that might influence their
thermodynamic performance is currently a debated point. In
particular, the role that quantum correlations and coherences in
schemes for the extraction of work from quantum systems appears to
be quite controversial~\cite{acin,fusco}. Yet, the clarification
of the relevance of genuinely quantum features would be key for
the grounding of quantum thermodynamics as a viable route towards
the construction of a framework for quantum
technologies~\cite{review}. Indeed, the very tight link between
thermodynamics and quantum entanglement~\cite{pleniovitelli} cries
loud for the clarification of the role of quantum correlations as
a resource for coherent thermodynamic processes and
transformations~\cite{oppenheim}.

 \begin{figure}[b]
        \begin{center}
        \includegraphics[width=0.8\columnwidth]{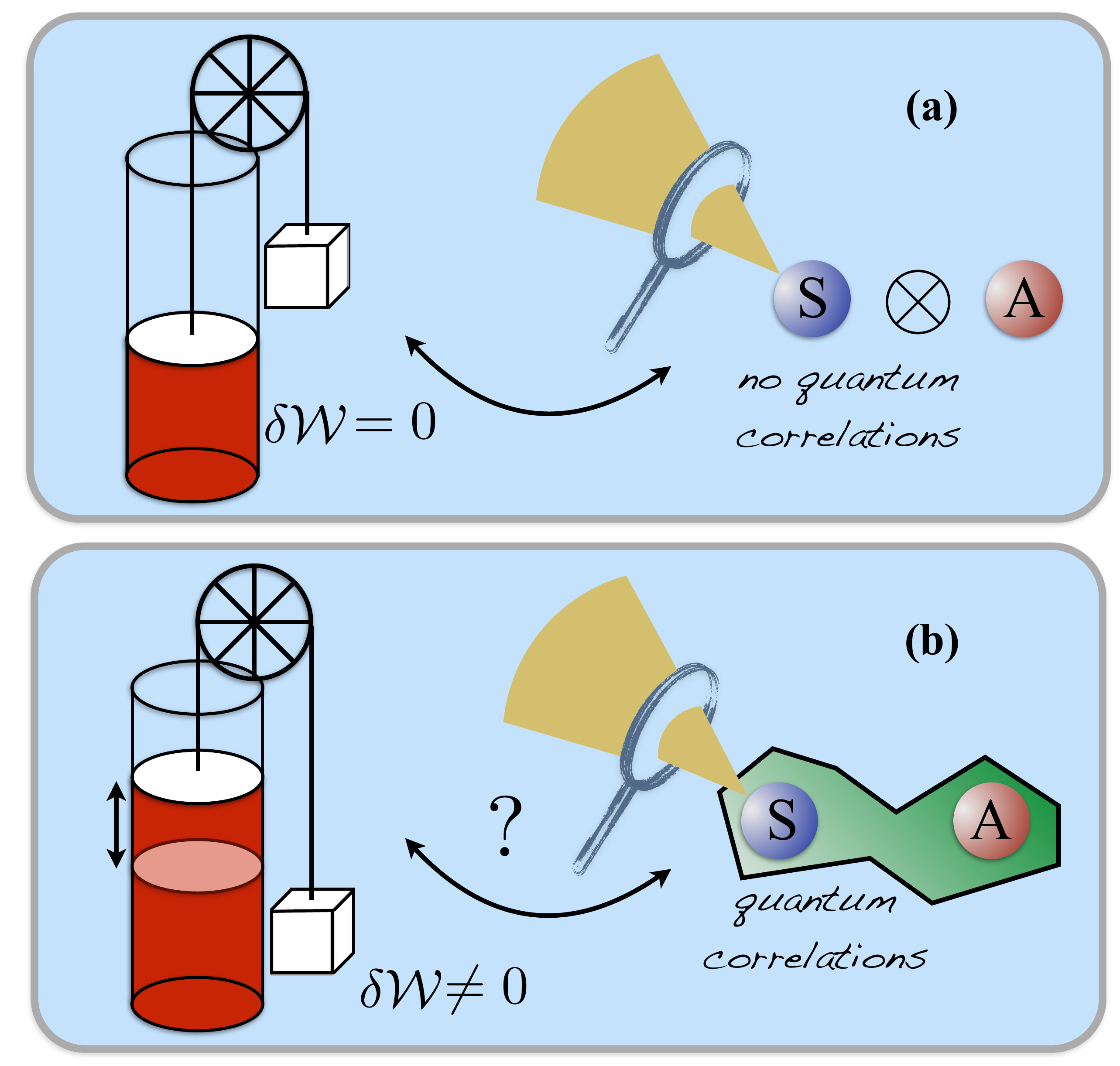}
\caption{(Color online) {\bf (a)} In our ancilla-assisted protocol, a null daemonic gain gain [i.e. $\delta{\cal W}=0$, cf. Eq.~\eqref{daemonergo}] implies the absence of quantum correlations between the system $S$ and the ancilla $A$ [as measured by the discord associated with measurements on $S$, cf. Eq.~\eqref{DStoA}]. {\bf (b)} A non-null value of $\delta{\cal W}$, on the other hand, implies the possible existence of quantum correlations between $S$ and $A$. For pure bipartite states (in arbitrary dimensions), the nullity of the daemonic gain is a necessary and sufficient condition for separability.}
        \label{scheme}
        \end{center}
\end{figure}

In this paper, we make steps towards the clarification of the role
of quantum correlations in work extraction processes by
investigating a simple ancilla-assisted primitive. We address the
concept of {\it ergotropy}, i.e. the maximum work that can be
gained from a quantum state, with respect to some reference
Hamiltonian, under cyclic unitaries~\cite{allahverdyan}. We
consider the joint state of a system and an isodimensional
ancilla, which can be measured in an arbitrary basis, and show
that quantum correlations are related to a possible increase of
the extracted work. More precisely, we demonstrate that if system
and ancilla share no quantum quantum discord~\cite{discord}, then
the information gathered through the measurements performed on the
state of the ancilla cannot help in catalyzing the extraction of
work from the system. We extend this result to the case of quantum
entanglement, thus establishing a tight link between enhanced
work-extraction performances and a clear-cut resource in quantum
information processing. We illustrate our findings for the
relevant case where system and ancilla are both embodied by
qubits, showing the existence of a families of states that provide
attainable (upper and lower) bounds to the {\it gain} in
extractable work at a set degree of quantum correlations between
system and ancilla. Not only do our results shed light on the core
role that quantum correlations have in thermodynamically relevant
processes they also open up the pathway towards the study of the
implications of the structure of generally quantum correlated
resources for ancilla-assisted work extraction schemes and the
grounding of the technological potential of the thermodynamics of
quantum systems.

 \noindent
{\it Ergotropy.--} We start by introducing the {\it ergotropy},
which is the maximum amount of work  that can be extracted from a
quantum system in a given state by means of a cyclical unitary
transformation~\cite{allahverdyan}. Consider a system $S$ with
Hamiltonian $\hat H_S$ and density matrix $\hat\rho_S$ given by,
\begin{equation}
  \hat H_S = \sum_k \epsilon_k \ket{\epsilon_k}\bra{\epsilon_k},\quad \hat \rho_S = \sum_k r_k \ket{r_k} \bra{r_k},
\end{equation}
with $\epsilon_k$ the energy of the $k^{\rm th}$ eigenstate
$\ket{\epsilon_k}$ of $\hat H_S$ and $r_k$ the population of the
eigenstate $\ket{r_k}$ of $\hat\rho_S$. If $\rho_S$ is a {\it
passive} state (i.e., if $[\hat\rho_s,\hat H_s]=0$ and  $r_n \geq
r_{m}$ whenever $\epsilon_n < \epsilon_m$), no work can be
extracted by means of a cyclical variation of the Hamiltonian
parameters ($\hat H_S(0)=\hat H_S(\tau)=\hat H_S$) over a fixed
time interval $[0,\tau]$ \cite{Pusz:78,Lenard:78,Allahverdyan:02}.
If the initial state $\hat\rho_S$ is not passive with respect to
$\hat H_S$, then work may be extracted cyclically, and its maximal
amount, the ergotropy $\mathcal{W}$, has been shown by
Allahverdyan to be given by
\begin{equation}
  \mathcal{W} = \sum_{j,k} r_k \epsilon_j \left( \abs{\braket{\epsilon_j}{r_k}}^2-\delta_{jk}\right).
\end{equation}

\noindent
{\it Daemonic work and quantum correlations.--} %Our main purpose is to put in relation a thermodynamic quantity (the work extracted) with the quantum correlations that exist between the parts of a physical system.
%Our goal is to extract as much work as possible from a system $S$ by performing local unitary cyclic operations on it. Such operation can be {\it driven} by changing a suitable time-dependent parameter $\lambda_t$ such that, by calling $\tau$ the period of each cyclic transformation and $\hat H_S(\lambda_t)$ the system's Hamiltonian, we have $\hat H_S(\lambda_0)=\hat H_S(\lambda_\tau)$.
In order to connect with the theory of quantum correlations, we extend the framework for maximal work extraction by introducing a non-interacting ancilla $A$ and assume that system and ancilla are initially prepared in the joint state $\hat\rho_{SA}$. The intuition behind the protocol, that will be discussed below, is that should $\hat\rho_{SA}$ bring about correlations between $S$ and $A$, a measurement performed on the ancilla would give us information about the state of $S$, which could then be used to enhance the amount of work that can be extracted from its state.

Within such a generalized framework, the amount of extractable
work crucially depends on the measurements performed on $A$, that
we describe through a set of orthogonal projectors
$\{\hat\Pi^A_a\}$. Upon the measurement of $A$ with outcome $a$,
the state of the system collapses onto the conditional density
matrix $\hat\rho_{S|a} =
\text{Tr}_A[\hat\Pi^A_a\hat\rho_{SA}\hat\Pi^A_a]/{p_a}$ with
probability $p_a = \Tr{\hat\Pi^A_a\hat\rho_{SA}}$. The time
evolution of state $\rho_{S|a}$ then follows a cyclic unitary
process $\hat U_a$ conditioned on the outcome of the measurement.
By averaging over all of the possible outcomes of the measurement,
the work extracted from the state of $S$ reads
 \begin{equation}
  W_{\{\hat\Pi^A_a\}} = \text{Tr}[ \hat \rho_S \hat H_S ] - \sum_a p_a \text{Tr}[\hat U_a \hat\rho_{S|a} \hat U_a^\dag \hat H_S ]
\end{equation}
with $\hat \rho_S=\text{Tr}_A[\hat\rho_{SA}]$. This quantity
explicitly depends on the specific control strategy determined by
the outcomes of the measurements $\{\hat\Pi^A_a\}$. We can thus
proceed to maximise the extracted work by performing the optimal
ergotropic transformation for each of the $\hat\rho_{S|a}$ such
that
%and the upper bound $\mathcal{W}_{\{\Pi^A_a\}}$
\begin{equation}
  \mathcal{W}_{\{\hat\Pi^A_a\}} = \text{Tr} [\hat \rho_S \hat H_S ] - \sum_a p_a \sum_k r_k^a \epsilon_k
\end{equation}
with $\{\hat\Pi^A_a\}$ a set of orthogonal projective
measurements, and $r_k^a$ the eigenvalues of $\hat\rho_{S|a}$. We
call this quantity the {\it Daemonic Ergotropy}.

On the other hand if we do not use the information obtained upon
measuring the ancilla, and thus control the system in the same
way, independently of the measurement outcomes (i.e. $\hat
U_a=\hat U$ for any $a$), the maximum extractable work would be
given by the ergotropy $\mathcal{W}$ associated with state
$\hat\rho_S=  \text{Tr}_A\{\rho_{SA} \} = \sum_k r_k \ket{r_k}\bra{r_k}$.
In the Appendix we have shown that the information acquired
through the measurements allows to extract more work than in the
absence of them, that is $\mathcal{W}_{\{\hat\Pi^A_a\}}\geq
\mathcal{W}$ and that it provides a tighter upperbound on the
ergotropy than the one derived in \cite{allahverdyan}. If we call
$\mathcal{W}_\text{th}$ the work extracted when the final state is
the Gibbs state $e^{-\beta\hat H_S(\lambda_0)}/\Tr{e^{-\beta\hat
H_S(\lambda_0)}}$ with the same entropy as $\hat\rho_S$, then
$\mathcal{W}_\text{th} \geq \mathcal{W}_{\{\hat\Pi^A_a\}}\geq
\mathcal{W}$.
% {\bf Mauro: the following sentence is not very clear: and so for macroscopic systems the gain can be small, because the difference $\mathcal{W}_\text{th} -\mathcal{W}$ is typically relatively small \cite{allahverdyan}.}
The characterization of the efficiency of work extraction scheme,
though, should take into account the energetic cost of the
measurements $\Delta E_\text{meas}$, whose quantification depends
on several factors. However, it cannot be smaller than the average
variation in the energy of $A$, so that a lower value can be
established as $\Delta E_\text{meas} \geq \sum_a p_a
\Tr{\hat\Pi^A_a\hat H_A} - \text{Tr}[\hat H_A \hat\rho_A]$ with
$\hat H_A$ the Hamiltonian of the ancilla and $\hat
\rho_A=\text{Tr}_S[\hat\rho_{SA}]$ its reduced state.

In what follows the main object of our attention will be the
difference $\mathcal{W}_{\{\Pi^A_a\}} - \mathcal{W}$, which is
expected to be related to the (nature and degree of) correlations
between $S$ and $A$. For instance, should $S$ and $A$ be initially
statistically independent, i.e.
$\hat\rho_{SA}=\hat\rho_S\otimes\hat\rho_A$, the measurements on
the ancilla would not bring about any information on the state of
$S$, as we would have $\hat\rho_{S|a}=\hat\rho_S$ for any set
$\{\hat\Pi^A_a\}$ and outcome $a$. Consequently, there would be no
gain in work extraction and $\mathcal{W}_{\{\Pi^A_a\}}=
\mathcal{W}$. However, besides such a rather extreme case, other
instances of no gain in work extraction (from correlated
$\hat\rho_{SA}$ states) might be possible, and our goal here is to
characterize such occurrences.

In order to achieve this goal, we introduce the quantity
\begin{equation}
\label{daemonergo}
\delta \mathcal{W} = \text{max}_{\{\hat\Pi_a^A\}} \mathcal{W}_{\{\hat\Pi^A_a\}} - \mathcal{W},
\end{equation}
which we dub, from now on {\it daemonic gain} in light of its
ancilla-assisted nature. Clearly, $\delta{\cal W}\ge0$ because of
the considerations above and the optimization entailed in
Eq.~\eqref{daemonergo}.

Our aim is to connect $\delta \mathcal{W}$ to quantum
correlations. To this end, we notice that $\delta \mathcal{W}$ is
invariant under local unitary transformations: any unitary
transformation on $S$ can be incorporated in the transformations
used for the extraction of work, while any unitary on $A$ is
equivalent to a change of measurement basis. Then, we consider
quantum discord~\cite{discord} as the figure of merit to quantify
the degree of quantum correlations shared by system and ancilla.
For measurements performed on the system $S$, discord is defined
as
\begin{equation}
\label{DStoA}
%\overrightarrow{\cal D}_{SA}
\DStoA={\cal I}_{SA}-\max_{\{\hat\Phi^A_a\}}\overrightarrow{\cal J}_{SA},
\end{equation}
where ${\cal I}_{SA}$ is the mutual information between $S$ and
$A$, and $\overrightarrow{\cal J}_{SA}$ is the one-way classical
information associated with an orthogonal measurement set
$\{\hat\Phi^A_a\}$ performed on the system~\cite{discord}.
Explicit definitions are given in the Appendix. We believe the
choice of Eq.~\eqref{DStoA} is well motivated in light of the
explicit asymmetry of both $\delta{\cal W}$ and $\DStoA$ with
respect to the subject of the projective measurements.
%Eq.~\eqref{DStoA} is asymmetric respect the two parts $S$ and $A$: depending on whether the von Neumann measurements are done on the state of $A$ or on the state of $S$, we have $\overleftarrow{\mathcal{D}}_{SA}$ or $\overrightarrow{\mathcal{D}}_{SA}$ respectively.
We are now in a position to state one of the main results of our
work, which we present in the form of the following Theorem:
 \begin{theorem}
 \label{th1}
% We have proof that it is possible to infer about the quantum correlations in the state $\rho_{SA}$ by looking the gain in the work extraction. If there isn't a gain, the state $\rho_{SA}$ needs to be at least a classical-quantum state, i.e.
For any system $S$ and ancilla $A$ prepared in a state $\hat\rho_{SA}$, we have
\begin{equation}
  \delta \mathcal{W} = 0 \Rightarrow  \overrightarrow{\mathcal{D}}_{SA}=0
\end{equation}
with $\delta{\cal W}$ and $\DStoA$ as defined in Eq.~\eqref{daemonergo} and \eqref{DStoA}, respectively.
\end{theorem}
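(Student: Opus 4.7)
The plan is to rewrite the daemonic gain in terms of the passive energy $E_p(\rho) = \min_U \mathrm{Tr}[U \rho U^\dagger \hat H_S]$ and exploit the requirement that it vanish for \emph{every} measurement on $A$ to pin down the structure of $\hat\rho_{SA}$. Since $\mathcal{W}(\rho) = \mathrm{Tr}[\rho \hat H_S] - E_p(\rho)$ and $\hat\rho_S = \sum_a p_a\, \hat\rho_{S|a}$, one has
\[
\mathcal{W}_{\{\hat\Pi^A_a\}} - \mathcal{W} \;=\; E_p(\hat\rho_S) - \sum_a p_a\, E_p(\hat\rho_{S|a}) \;\ge\; 0,
\]
by concavity of $E_p$ (an infimum of linear functionals in $\rho$). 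For a fixed measurement the bound is saturated if and only if there exists a \emph{single} unitary $U^\star$ such that $U^\star \hat\rho_{S|a} U^{\star\dagger}$ is passive with respect to $\hat H_S$ for every $a$ with $p_a > 0$.

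Under the assumption $\delta \mathcal{W} = 0$ this saturation holds for every orthogonal measurement $\{\hat\Pi_a^A\}$ on $A$. A convex combination of states that are passive in a common basis is itself passive, so the common passifier $U^\star$ automatically passifies $\hat\rho_S$ as well; for generic (non-degenerate) spectra of $\hat\rho_S$ and $\hat H_S$ the passifier of $\hat\rho_S$ is unique, so $U^\star$ does not depend on the measurement. Absorbing $U^\star$ into a global redefinition of the $S$ basis, every conditional state $\hat\rho_{S|\phi}$---for any pure $\ket{\phi}_A$ (which can always be completed to an orthonormal basis of $A$)---is diagonal in the energy eigenbasis $\{\ket{k}_S\}$.

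Expanding $\hat\rho_{SA} = \sum_{k,l,i,j} c^{ij}_{kl}\, \ket{k}\bra{l}_S \otimes \ket{i}\bra{j}_A$ in the $S$ energy basis and any fixed basis $\{\ket{i}_A\}$ of $A$, the $S$-diagonality of $\langle\phi|_A\, \hat\rho_{SA}\, |\phi\rangle_A$ for every $\ket{\phi}_A = \sum_i \phi_i \ket{i}_A$ forces the sesquilinear form $\sum_{i,j} c^{ij}_{kl}\, \bar{\phi_i}\phi_j$ to vanish identically in $\phi$ whenever $k\neq l$; evaluating on $\phi = e_p$, $\phi = e_p + e_q$ and $\phi = e_p + i e_q$ then yields $c^{ij}_{kl}=0$ for all $k\neq l$ and all $i,j$. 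Consequently $\hat\rho_{SA} = \sum_k \ket{k}\bra{k}_S \otimes \hat M_k$ with positive $\hat M_k$ acting on $A$: this is a classical--quantum state on the $S$ side, precisely the structural characterisation of $\overrightarrow{\mathcal{D}}_{SA}=0$. The main technical obstacle I foresee is the step that renders $U^\star$ measurement-independent, which hinges on uniqueness of the passifier of $\hat\rho_S$; this is straightforward generically, but spectral degeneracies in $\hat\rho_S$ or $\hat H_S$ must be handled by a continuity/perturbation argument (e.g.\ perturbing $\hat H_S$ infinitesimally to lift the degeneracy and passing to a limit at the end).
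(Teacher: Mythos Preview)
Your approach is correct and takes a genuinely different route from the paper. The paper argues the contrapositive: assuming $\DStoA\neq 0$, it splits into two cases and, in the nontrivial one, explicitly constructs a new projective measurement on $A$ (rotating two basis elements $\ket{\bar a},\ket{\bar a'}$ into their $\pm$ superpositions) that produces a conditional state with strictly positive ergotropy relative to the rotated Hamiltonian $\sum_k\epsilon_k\ket{r_k}\bra{r_k}$. You instead prove the direct implication: rewriting the gain as a concavity deficit of the passive energy, you characterise saturation and then use a polarization argument over all pure $\ket{\phi}_A$ to force the classical--quantum form $\hat\rho_{SA}=\sum_k\ket{k}\bra{k}_S\otimes\hat M_k$. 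Your route is cleaner and more structural; the paper's is more hands-on and constructive.

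One simplification you should incorporate: the uniqueness of the passifier is a red herring. Fix \emph{any} passifier $U_S$ of $\hat\rho_S$; then saturation reads
\[
\sum_a p_a\Bigl(\mathrm{Tr}\bigl[U_S\hat\rho_{S|a}U_S^\dagger\hat H_S\bigr]-E_p(\hat\rho_{S|a})\Bigr)=0
\]
with each summand nonnegative, so this single $U_S$ already passifies every conditional state, for every measurement. No genericity hypothesis or perturbation of $\hat\rho_S$ is required. The only residual assumption is nondegeneracy of $\hat H_S$, so that ``passive'' forces diagonality in a fixed basis. Your proposed fix of perturbing $\hat H_S$ does \emph{not} work here, since $\delta\mathcal{W}$ itself depends on $\hat H_S$ and need not remain zero under the perturbation. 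However, this nondegeneracy is equally implicit in the paper's proof (its ``ergotropy zero iff diagonal in $\{\ket{r_k}\}$'' step fails otherwise) and is in fact necessary for the statement: for a fully degenerate $\hat H_S$ one has $\delta\mathcal{W}=0$ identically while $\DStoA$ can be arbitrary.
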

The asymmetry of the daemonic gain is well reflected into the impossibility of linking $\delta{\cal W}$ to the discord associated with measurements performed on the ancilla. That is
 \begin{equation}
 \label{corol}
   \delta \mathcal{W} = 0 \nRightarrow  \DAtoS=0.
 \end{equation}
The proof of both Theorem~\ref{th1} and the corollary statement in Eq.~\eqref{corol} are presented fully in the Appendix, while a scheme of principle is presented in Fig.~\ref{scheme}. It is important to observe that, in general, the inverse of Theorem~\ref{th1} does not hold, i.e.
% \begin{equation}
$ \overleftarrow{\mathcal{D}}_{SA}=0 \text{ or } \overrightarrow{\mathcal{D}}_{SA}=0 \nRightarrow   \delta \mathcal{W} = 0$ as there can well be classically correlated states associated with a non-null daemonic gain.
% \end{equation}
However, a remarkable result is found when $\hat\rho_{SA}$ is pure, for which the only possible quantum correlations are embodied by entanglement.
\begin{theorem}
\label{th2}
For any system $S$ and ancilla $A$ prepared in a pure state $\hat \rho_{SA}=\ket{\psi}\bra{\psi}_{SA}$ we have
\begin{equation}
\delta{\cal W}=0\Leftrightarrow\ket{\psi}_{SA}~\text{is separable},
\end{equation}
and $\delta \mathcal{W} = \sum_k r_k \epsilon_k - \epsilon_1$, where $r_k$ are the Schmidt coefficients of $\ket{\psi}_{SA}$ and $\epsilon_k$ are the eigenvalues of $\hat H_S$, ordered such that $r_k\geq r_{k+1}$ and $\epsilon_k\leq \epsilon_{k+1}$.
\end{theorem}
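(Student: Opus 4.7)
The plan is to exploit the Schmidt decomposition together with the observation that the ergotropy of a pure state is simply its mean energy minus the ground-state energy.

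First I would write $\ket{\psi}_{SA}=\sum_k \sqrt{r_k}\ket{u_k}_S \ket{v_k}_A$ with $r_k\ge r_{k+1}$, so that the marginal is $\hat\rho_S=\sum_k r_k\ket{u_k}\bra{u_k}$. Inserting this into the ergotropy formula and using the rearrangement inequality to anti-order Schmidt weights against the energies, one obtains $\mathcal{W}=\Tr{\hat\rho_S\hat H_S}-\sum_k r_k\epsilon_k$ with $\epsilon_k\le\epsilon_{k+1}$.

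Next I would consider the projective measurement on $A$ in the Schmidt basis, $\hat\Pi_k^A=\ket{v_k}\bra{v_k}$. Outcome $k$ occurs with probability $r_k$ and collapses $S$ onto the pure state $\ket{u_k}$, whose ergotropy equals $\bra{u_k}\hat H_S\ket{u_k}-\epsilon_1$, achieved by any unitary that rotates $\ket{u_k}$ onto $\ket{\epsilon_1}$. Averaging over outcomes gives
\begin{equation}
\mathcal{W}_{\{\hat\Pi_k^A\}}=\Tr{\hat\rho_S\hat H_S}-\epsilon_1.
\end{equation}
To upgrade this achievable value into the maximum over all projective measurements on $A$, I would note that for any set $\{\hat\Pi_a^A\}$ the conditional eigenvalues $\{r_k^a\}_k$ form a probability distribution, whence $\sum_k r_k^a\epsilon_k\ge\epsilon_1$ and, averaging over $a$, $\mathcal{W}_{\{\hat\Pi_a^A\}}\le\Tr{\hat\rho_S\hat H_S}-\epsilon_1$. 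This matches the Schmidt-basis value, so subtracting $\mathcal{W}$ yields the announced closed form $\delta\mathcal{W}=\sum_k r_k\epsilon_k-\epsilon_1$.

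For the equivalence with separability I would recast $\delta\mathcal{W}=\sum_k r_k(\epsilon_k-\epsilon_1)$, a sum of non-negative terms that vanishes iff $r_k=0$ whenever $\epsilon_k>\epsilon_1$. Under the (tacit) assumption that the ground level of $\hat H_S$ is non-degenerate, only $r_1$ survives, the Schmidt rank drops to one and $\ket{\psi}_{SA}$ is a product state; the converse is immediate, as product states have a single Schmidt coefficient and directly return $\delta\mathcal{W}=0$. The main obstacle I foresee is precisely this non-degeneracy caveat: if the ground space has multiplicity greater than one, entangled states supported entirely within it would also give $\delta\mathcal{W}=0$, so the stated iff implicitly relies on that restriction on $\hat H_S$, which I would flag explicitly in the write-up.
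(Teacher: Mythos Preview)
Your argument is correct and follows essentially the same route as the paper: Schmidt-decompose $\ket{\psi}_{SA}$, measure $A$ in the Schmidt basis to reach pure conditional states and hence the value $\Tr{\hat\rho_S\hat H_S}-\epsilon_1$, then read off $\delta\mathcal{W}=\sum_k r_k\epsilon_k-\epsilon_1$ and conclude separability from $r_k=\delta_{1k}$. If anything, your write-up is more complete than the paper's, since you supply the explicit upper bound $\sum_k r_k^a\epsilon_k\ge\epsilon_1$ over all measurements (which the paper only gestures at via Corollary~\ref{coroll2}) and you correctly flag the tacit non-degeneracy assumption on the ground level of $\hat H_S$ that the paper uses without comment.
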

%the only correlations can be entanglement correlations, and so $\delta \mathcal{W} = 0$ if and only if the state $\rho_{SA}$ is separable.
Theorem~\ref{th2} is a thermodynamically motivated separability criterion for pure bipartite states in arbitrary dimensions and an explicit quantitative link between the theory of entanglement and the thermodynamics of information.

\noindent
{\it Illustrations in two-qubit systems.--} The statements in Theorems~\ref{th1} and \ref{th2} are completely general, and independent of the nature of either $S$ or $A$, which could in principle live in Hilbert spaces of different dimensions. However, in order to illustrate their implications and gather further insight into the relation between the introduced daemonic gain and both discord and entanglement, here we focus on the smallest non-trivial situation, which is embodied by a two-qubit system.

\begin{figure}[b]
        \begin{center}
                \includegraphics[width=0.9\columnwidth]{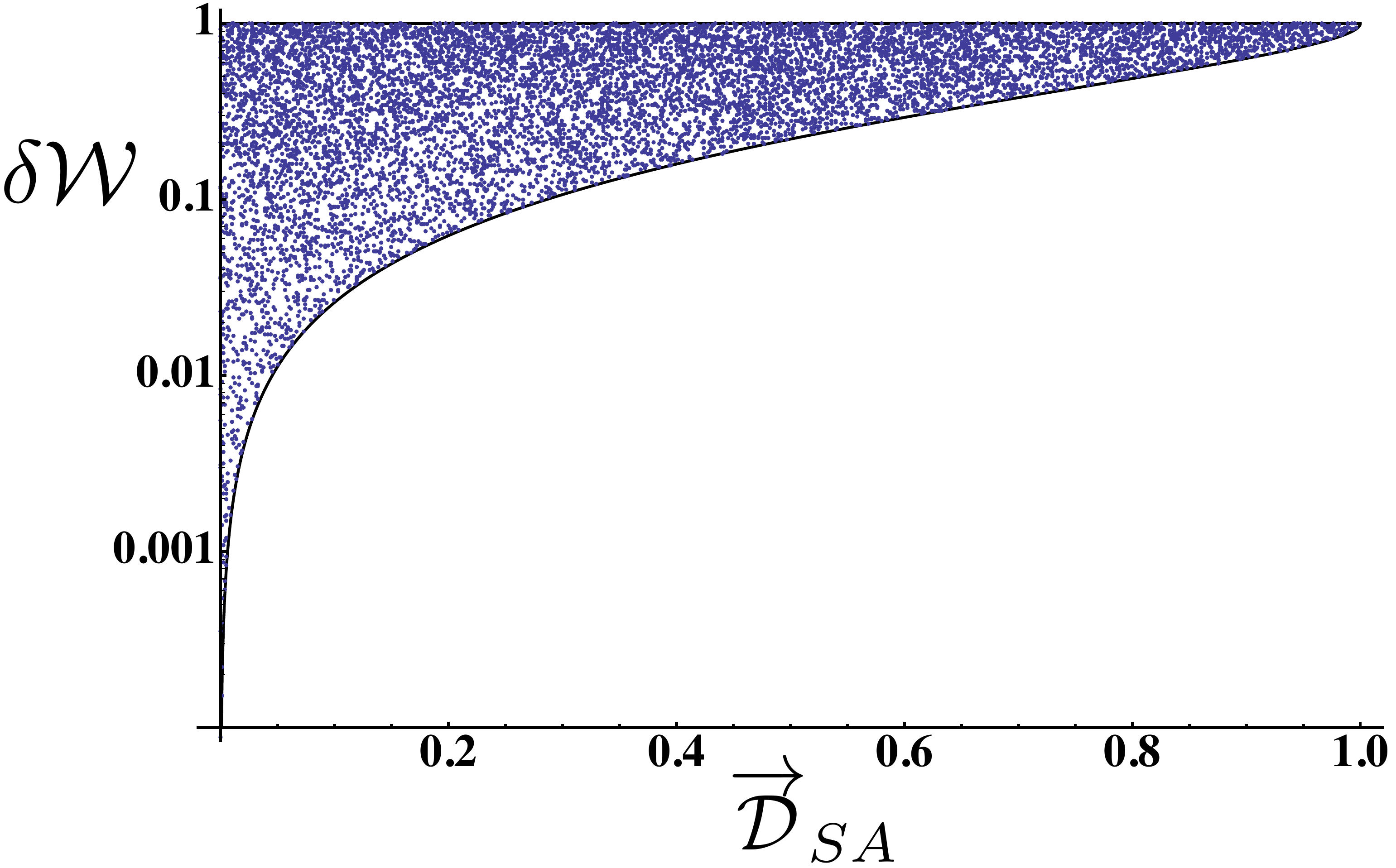}
\caption{(Color online) Distribution of two-qubit states in the
daemonic gain-vs-discord plane. We have generated $3\times10^3$
general random states of system and ancilla, evaluating the
discord and daemonic gain  for each of them (blue dots). The blue
curves enclosing the distribution correspond to the boundaries
discussed in the body of the paper. Notice that states with no
quantum correlations may correspond to arbitrarily large values of
the daemonic gain $\delta{\cal W}$. }
        \label{distDisc}
        \end{center}
\end{figure}

%To understand better the results, it is useful to do a characterization of the gain in a physical system. We focus on the simplest case possible, where the two systems $S$ and $A$ are qubits.
We start with the implications of Theorem~\ref{th1} and compare
$\delta \mathcal{W}$ with discord
$\overrightarrow{\mathcal{D}}_{SA}$. Since both these quantities
are invariant under local unitary transformations on
$\hat\rho_{SA}$, without loss of generality we can consider the
system Hamiltonian $\hat H_S=-\sigma_z$. In Fig.~\ref{distDisc} we
show the distribution of randomly generated two-qubit states over
the $\delta{\cal W}$-versus-$\DStoA$ plane. Such an extensive
numerical analysis reveals that, for any state $\hat\rho_{SA}$
with discord $\overrightarrow{\mathcal{D}}_{SA}=\mathcal{D}$, we
have
\begin{equation}
\delta \mathcal{W}\geq\delta \mathcal{W}_\text{min}(\mathcal{D})=h\left(1-{\mathcal{D}}/{2}\right),
\end{equation}
where $h(x)=-x\log_2(x)-(1-x)\log_2(1-x)$. The monotonicity of
$h(x)$ implies that growing values of quantum correlations are
associated with a monotonically increasing daemonic gain: for the
states lying on such lower bound, quantum correlations form a
genuine resource for the catalysis of thermodynamic work
extraction. Moreover, as $\lim_{x\to1}h(x)=0$, a two-qubit system
with $\DStoA=0$ (i.e. a classically correlated state) can achieve,
in principle, any value of daemonic gain up to the maximum that,
for this case, is $\delta{\cal W}=1$. On the other hand, the
daemonic ergotropy is maximized by taking pure two-qubit states
with growing degree of entanglement.

% Therefore, a minimum amount of quantum correlations (as quantified by discord) is required in order  for instance, that we need at least a certain amount of quantum correlations in the state $\rho_{SA}$, quantified by $\mathcal{D}$, in order to be sure to have a gain $\delta \mathcal{W}$ bigger or equal to $\delta \mathcal{W}_\text{min}(\mathcal{D})$.

\begin{figure}[t]
        \begin{center}
        \includegraphics[width=0.92\columnwidth]{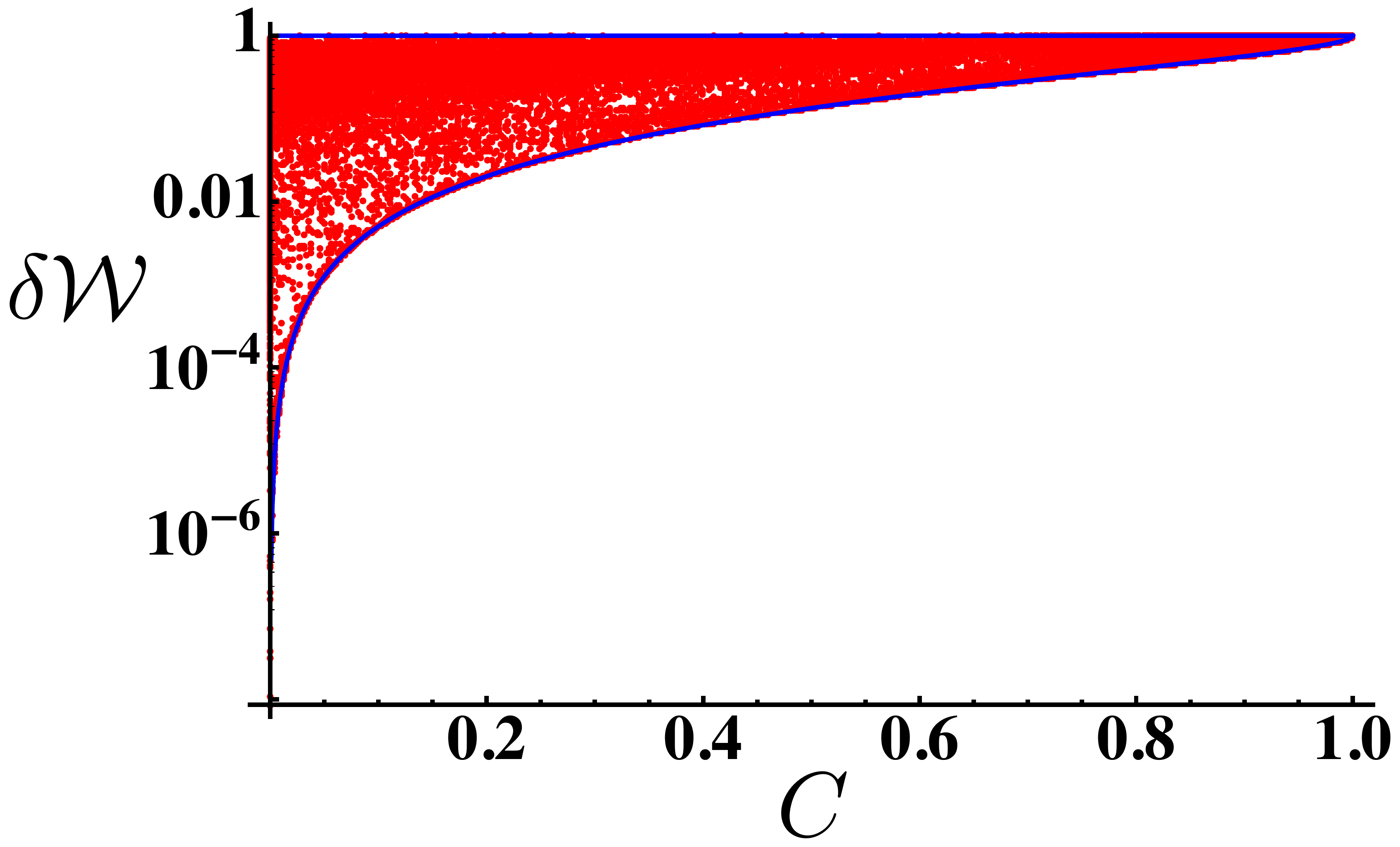}
\caption{(Color online) Distribution of two-qubit states in the
daemonic gain-vs-concurrence plane. We have generated $10^4$
general random states of system and ancilla, evaluating the
concurrence ${\cal C}$ and daemonic gain $\delta{\cal W}$ for each
of them (red dots). The blue curves enclosing the distribution
correspond to boundary families discussed in the body of the
paper. Notice that states at ${\cal C}=0$ may correspond to
arbitrarily large values of the daemonic gain $\delta{\cal W}$.}
        \label{distEnt}
        \end{center}
\end{figure}
We can now address Theorem~\ref{th2} and its consequences for
two-qubit states. Similarly to what was done above, we have
studied the distribution of random two-qubit states in the
daemonic ergotropy-versus-entanglement plane, choosing quantum
concurrence ${\cal C}$ as a measure for the latter~\cite{horo}.
The results are illustrated in Fig.~\ref{distEnt}. As before, a
lower bound to the amount of daemonic ergotropy at set value of
concurrence can be identified. We have that, for any state
$\hat\rho_{SA}$ with concurrence ${\cal C}$
\begin{equation}
\delta{\cal W}\ge\delta{\cal W}_\text{min}({\cal C})=1-\sqrt{1-{\cal C}^2},
\end{equation}
a lower bound that is achieved by Bell-diagonal states that are
fully characterized in the Appendix. The upper bound, on the
other hand, is achieved by maximally ergotropic (in our daemonic
sense) states
$\hat\rho_{SA}=[\ket{00}\bra{00}_{SA}+\ket{11}\bra{11}_{SA}+{\cal
C}(\ket{00}\bra{11}_{SA}+h.c.)]/2$.

\noindent {\it Conclusions.--} We have illustrated an
ancilla-assisted protocol for work extraction that takes advantage
of the sharing of quantum correlations between a system and an
ancilla that is subjected to suitably chosen projective
measurements. Our approach allowed us the introduce of a new form
of information-enhanced ergotropy, which we have dubbed daemonic,
that acts aptly as a witness for quantum correlations in general,
and serves as a necessary and sufficient criterion for
separability of bipartite pure states. We have characterised fully
the distribution of quantum correlated two-qubit states with
respect to the figure of merit set by the daemonic ergotropy,
finding that quantum correlations embody a proper resource for the
work-extraction performances of the states that minimize
$\delta{\cal W}$. Our work opens up interesting avenues for the
thermodynamic interpretation of quantum correlations, clarifies
their resource-role in ancilla-assisted information thermodynamics
and opens up possibilites to understand  the role of correlations
in the charging power of quantum batteries \cite{Binder:15(b)}.

\noindent
{\it Acknowledgements.--} J.~G. would like to sincerely thank F.~Binder, K.~Modi and S.~Vinjanampathy for discussions related to this work. G. Francica thanks the Centre for Theoretical Atomic, Molecular, and Optical Physics, School of Mathematics and Physics, Queen's University Belfast, for hospitality during the completion of this work.  We acknowledge support from the EU FP7 Collaborative Projects
QuProCS and TherMiQ, the John Templeton Foundation (grant number 43467), the Julian Schwinger Foundation (grant number JSF-14-7-0000), and the UK EPSRC (grant number EP/M003019/1). We acknowledge partial support from COST Action MP1209.

\section{Appendix}

Here we define the notion of discord used in the paper, present
details of the core results discussed in the main body of the
paper and the formal proofs of both Theorem~\ref{th1} and
\ref{th2}.

\noindent
{\it Discord.--} We recall the definition of quantum discord $\overleftarrow{\mathcal{D}}_{SA}$ associated with orthogonal measurements $\hat\Phi^A_a$ performed over the ancilla~\cite{discord}
 \begin{equation}
 \label{defdisc}
   \DAtoS= \mathcal{I}_{SA} - \text{max}_{\{\hat\Phi^A_a\}} \overleftarrow{\mathcal{J}}_{SA},
 \end{equation}
where $\mathcal{I}_{SA}$ is the mutual information $\mathcal{I}_{SA}=S(\hat\rho_A)+S(\hat\rho_S)-S(\hat\rho_{SA})$, $\overleftarrow{\mathcal{J}}_{SA} = S(\hat \rho_S)-\sum_a p_a S(\hat \rho_{S|a})$ is the so-called {\it one-way classical information} and $S(\hat\rho)=-\text{Tr}[\hat\rho\log_2\hat\rho]$ is the von Neumann entropy of the general state $\hat\rho$. The maximization inherent in Eq.~\eqref{defdisc} is over all the possible orthogonal measurements on the state of $A$. Similarly we define the discord $\DStoA$ associated with measurements performed over the state of the system $S$ as Eq.~\eqref{defdisc} with the role of $S$ and $A$ being swapped. \\

\noindent
{\it Theorem~\ref{th1}.--} In order to provide a full-fledged assessment of Theorem~\ref{th1}, we should first discuss the following Lemma.
\begin{lemma}
\label{le1} For any set of orthogonal projective measurements
$\{\hat\Pi^A_a\}$ performed over an ancilla $A$ prepared with a
system $S$ in a state $\hat\rho_{SA}$, we have
$\mathcal{W}_{\{\Pi^A_a\}}\geq \mathcal{W}$
\end{lemma}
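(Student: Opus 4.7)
The plan is to rewrite both $\mathcal{W}$ and $\mathcal{W}_{\{\hat\Pi^A_a\}}$ in a form that exposes them as differences between the (common) initial energy $\text{Tr}[\hat\rho_S \hat H_S]$ and a \emph{passive energy} functional, and then exploit a concavity property of that functional together with the fact that the marginal $\hat\rho_S$ is the average of the post-measurement conditional states $\hat\rho_{S|a}$.

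Concretely, I would first introduce the passive-energy functional
\begin{equation}
\pi(\hat\rho)\;:=\;\min_{U}\,\text{Tr}[\,U\hat\rho U^{\dagger}\hat H_S\,],
\end{equation}
and recall Allahverdyan's characterization, which identifies the minimizer with the rearrangement that pairs the largest eigenvalue of $\hat\rho$ with the smallest eigenvalue of $\hat H_S$, giving $\pi(\hat\rho)=\sum_{k}r_{k}\epsilon_{k}$ with $r_{k}$ decreasing and $\epsilon_{k}$ increasing. With this notation, the ordinary ergotropy reads $\mathcal{W}=\text{Tr}[\hat\rho_S\hat H_S]-\pi(\hat\rho_S)$, while the daemonic ergotropy reads $\mathcal{W}_{\{\hat\Pi^A_a\}}=\text{Tr}[\hat\rho_S\hat H_S]-\sum_a p_a\,\pi(\hat\rho_{S|a})$, so the inequality reduces to showing $\pi(\hat\rho_S)\ge\sum_a p_a\,\pi(\hat\rho_{S|a})$.

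The two remaining ingredients are then essentially immediate. First, $\pi$ is a pointwise minimum of affine (hence concave) functionals of $\hat\rho$, and therefore concave on the state space. Second, the non-selective projective measurement on $A$ does not alter the reduced state of $S$: using cyclicity of the partial trace and the fact that the $\hat\Pi^A_a$ are orthogonal projectors summing to the identity,
\begin{equation}
\sum_{a}p_{a}\,\hat\rho_{S|a}\;=\;\text{Tr}_{A}\!\left[\sum_{a}\hat\Pi^{A}_{a}\hat\rho_{SA}\hat\Pi^{A}_{a}\right]\;=\;\text{Tr}_{A}[\hat\rho_{SA}]\;=\;\hat\rho_{S}.
\end{equation}
Applying concavity of $\pi$ to this convex decomposition yields $\pi(\hat\rho_S)\ge\sum_a p_a\,\pi(\hat\rho_{S|a})$, which is exactly what is needed.

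I do not expect a real obstacle here: Allahverdyan's identification of the optimum and the concavity of a min of affine functionals are both standard, and the ``no-signaling'' identity for the marginal is a one-line computation. The only place where some care is warranted is checking that the min in the definition of $\pi$ is actually attained (so that $\pi$ is genuinely a min, not an inf), which holds because the unitary group is compact and the objective is continuous; this justifies invoking concavity without any regularization.
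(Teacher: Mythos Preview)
Your argument is correct and rests on the same two ingredients as the paper's proof---the identity $\hat\rho_S=\sum_a p_a\,\hat\rho_{S|a}$ and a comparison of passive energies---but you package the second step more abstractly. The paper expands $\mathcal{W}_{\{\hat\Pi^A_a\}}-\mathcal{W}$ in components to obtain $\sum_a p_a\sum_{k,j} r^a_j\,\epsilon_k\bigl(|\langle r_k|r^a_j\rangle|^2-\delta_{kj}\bigr)$ and then identifies each $a$-summand as the ergotropy of $\hat\rho_{S|a}$ with respect to the rotated (isospectral) Hamiltonian $\tilde H=\sum_k\epsilon_k|r_k\rangle\langle r_k|$, hence nonnegative. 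Your version subsumes this in the single observation that $\pi(\cdot)=\min_U\text{Tr}[U(\cdot)U^\dagger\hat H_S]$ is concave as a pointwise minimum of affine functionals. The two are equivalent: the paper's ``ergotropy of $\hat\rho_{S|a}$ relative to $\tilde H$'' is precisely $\text{Tr}[\hat\rho_{S|a}\tilde H]-\pi(\hat\rho_{S|a})$, and averaging over $a$ reproduces your concavity inequality $\pi(\hat\rho_S)\ge\sum_a p_a\,\pi(\hat\rho_{S|a})$. Your formulation is cleaner and exposes the structure more transparently; the paper's has the minor advantage of being entirely self-contained, reducing everything to the already-established nonnegativity of ergotropy rather than appealing to a general concavity principle.
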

\begin{proof}
In order to show this statement, we observe that
\begin{equation}
\label{res1}
\begin{aligned}
  \hat\rho_S &= \sum_k r_k \ket{r_k} \bra{r_k} = \sum_a {\rm Tr}_A[\hat\Pi^A_a\hat\rho_{SA}] \\
 &= \sum_a p_a \hat\rho_{S|a} = \sum_a p_a \sum_k r^a_k  \ket{r^a_k} \bra{r^a_k}.
   \end{aligned}
\end{equation}
Eq.~\eqref{res1} implies that $r_k = \sum_a p_a \sum_j  r^a_j\vert{\braket{r_k}{r^a_j}}\vert^2$. As $\mathcal{W}_{\{\hat\Pi^A_a\}} -\mathcal{W} = \sum_k \epsilon_k \left( r_k - \sum_a p_a r^a_k \right)$, we have that
\begin{equation}
   \mathcal{W}_{\{\hat\Pi^A_a\}} -\mathcal{W} = \sum_a p_a \sum_{k,j} r^a_j \epsilon_k \left( \vert{\braket{r_k}{r^a_j}}\vert^2 - \delta_{kj} \right)\geq 0
\end{equation}
due to the fact that $\sum_{k,j} r^a_j \epsilon_k \left( \vert{\braket{r_k}{r^a_j}}\vert^2 - \delta_{k\,j} \right)\geq 0$, as this is the ergotropy of $\hat\rho_{S|a}$ relative to the Hamiltonian $\sum_k \epsilon_k \ket{r_k}\bra{r_k}$.
\end{proof}

We are now in a position to provide the full proof of Theorem~\ref{th1}, which we restate here for easiness of consultation:
% We have proof that it is possible to infer about the quantum correlations in the state $\rho_{SA}$ by looking the gain in the work extraction. If there isn't a gain, the state $\rho_{SA}$ needs to be at least a classical-quantum state, i.e.
\noindent
{\bf Theorem 1.} {\it For any system $S$ and ancilla $A$ prepared in a state $\hat\rho_{SA}$, we have
\begin{equation}
  \delta \mathcal{W} = 0 \Rightarrow  \overrightarrow{\mathcal{D}}_{SA}=0
\end{equation}
with $\delta{\cal W}$ and $\DStoA$ as defined in Eq.~\eqref{daemonergo} and \eqref{DStoA}, respectively.}

%\subsubsection{Proof of $\delta \mathcal{W} = 0 \Rightarrow  \protect\overrightarrow{\mathcal{D}}_{SA}=0$ }
\begin{proof}
In light of Lemma~\ref{le1}, we have that
%We observe that since $\mathcal{W}_{\{\Pi^A_a\}} \geq \mathcal{W}$ for every set of projectors $\{\Pi^A_a\}$, then
$\mathcal{W}_{\{\hat\Pi^A_a\}} - \mathcal{W} = 0 \Leftrightarrow \delta \mathcal{W}=0$ for any set $\{ \hat\Pi^A_a \}$. Then, in order to proof the statement of the Theorem, it is enough to show that, regardless of the choice of projective set $\{ \hat\Pi^A_a \}, \quad  \mathcal{W}_{\{\hat\Pi^A_a\}} - \mathcal{W} = 0 \Rightarrow \overrightarrow{\mathcal{D}}_{SA}=0$.
 Let assume that $\overrightarrow{\mathcal{D}}_{SA}\neq0$. Then, there is at least a set $\{\hat\Pi^A_a\} $ such that $\mathcal{W}_{\{\hat\Pi^A_a\}} - \mathcal{W} \neq 0$. Two cases are possible:
\begin{enumerate}
\item[{\it (i)}] There is a measurement outcome $\bar{a}$ such that $\hat\rho_{S|\bar{a}} \neq \sum_k r^{\bar{a}}_k \ket{r_k}\bra{r_k}$ with $r^{\bar{a}}_k\geq r^{\bar{a}}_{k+1}$. Then
 \begin{equation}
    \mathcal{W}_{\{\Pi^A_a\}} -\mathcal{W} \geq  p_{\bar{a}} \sum_{k,j}r^{\bar{a}}_j \epsilon_k \left( \vert{\braket{r_k}{r^{\bar{a}}_j}}\vert^2 - \delta_{kj} \right) > 0,
 \end{equation}
given that $\sum_{k,j}r^{\bar{a}}_j \epsilon_k \left( \vert{\braket{r_k}{r^{\bar{a}}_j}}\vert^2 - \delta_{kj} \right) $ is the ergotropy of $\hat\rho_{S|\bar{a}}$ relative to the Hamiltonian $\sum_k \epsilon_k \ket{r_k}\bra{r_k}$, and is zero if and only if $\hat\rho_{S|\bar{a}} = \sum_k r^{\bar{a}}_k \ket{r_k}\bra{r_k}$.

\item[{\it (ii)}] For every $a$ $\rho_{S|a} = \sum_k r^{a}_k \ket{r_k}\bra{r_k}$ with $r^{a}_k\geq r^{a}_{k+1}$. In this case $\mathcal{W}_{\{\Pi^A_a\}} -\mathcal{W} =0$. However, as $\hat\rho_{SA}$ is such that $\overrightarrow{\mathcal{D}}_{SA}\neq0$, it is always possible to identify another set $\{\hat\Pi'^A_a\}$ such that $\mathcal{W}_{\{\Pi'^A_a\}} -\mathcal{W} > 0$. In order to show how this is possible, we note that $\hat\rho_{SA}$ can be written as
 \begin{equation*}
  \hat \rho_{SA} = \sum_{a,a'}\sum_{k,k'} C^{aa'}_{kk'} \ket{r_k}\bra{r_{k'}}_S \otimes \ket{a}\bra{a'}_A
 \end{equation*}
with the condition $p_a C^{aa}_{kk'} = r^a_k \delta_{kk'}$. As $\overrightarrow{\mathcal{D}}_{SA}\neq0$, there are two measurement outcomes $\bar{a}$ and $\bar{a}'$ such that $C^{\bar{a}\bar{a}'}_{kk'} \neq C^{\bar{a}\bar{a}'}_{k} \delta_{kk'}$. Should this be not true, we would have % i.e. $\forall a,a'\quad C^{aa'}_{kk'} = C^{aa'}_{k} \delta_{kk'}$ then
%  \begin{equation*}
%   \rho_{SA} = \sum_a \sum_{a'} \sum_k  C^{aa'}_{k} \ket{r_k}\bra{r_{k}}\otimes\ket{a_A}\bra{a'_A}
 %\end{equation*}
% and so
$\overrightarrow{\mathcal{D}}_{SA}=0$, and thus a contradiction. Therefore, as $\DStoA\neq0$, the matrix $_A\bra{\bar{a}}\hat\rho_{SA}\ket{\bar{a}'}_A$ cannot be diagonal in the basis $\{\ket{r_k}_S\}$ (here $\ket{\bar{a}_A}$ is the eigenstate of $\hat\Pi^A_a$ with eigenvalues $\bar{a}$). If $\bar{a}=\bar{a}'$, case ${\it (ii)}$ cannot occur.

However, if $\bar{a}\neq \bar{a}'$, we can define the new set of projectors $\{\hat\Pi'^A_a\}$ with elements $\hat\Pi'^A_{\bar{a}} = ( \ket{\bar{a}} +  \ket{\bar{a}'}) ( \bra{\bar{a}} + \bra{\bar{a}'})/2$, $\hat\Pi'^A_{\bar{a}'} =( \ket{\bar{a}} -  \ket{\bar{a}'})( \bra{\bar{a}} - \bra{\bar{a}'})/2$ and  $\hat\Pi'^A_{a}=\hat\Pi^A_a$ for $a\neq \bar{a},\bar{a}'$. Then, the density matrix %$\hat\rho'_{S|\bar{a}}$ relative to the projector  $\Pi'^A_{\bar{a}}$, defined by
$\hat\rho'_{S|\bar{a}} = {\rm Tr}_A\{\hat\Pi'^A_{\bar{a}} \hat\rho_{SA}\}/p'_{\bar{a}}$ reads
\begin{equation}
\begin{aligned}
% \nonumber % Remove numbering (before each equation)
 \hat \rho'_{S|\bar{a}} &= \frac{1}{2p'_{\bar{a}}}\Big[\sum_k(p_{\bar{a}}r^{\bar{a}}_k+p_{\bar{a}'}r^{\bar{a}'}_k)\ket{r_k}\bra{r_k}_S \\
  &+ \left(_{A}\!\bra{\bar{a}}\hat\rho_{SA}\ket{\bar{a}'}_A+_{A}\!\bra{\bar{a}'}\hat\rho_{SA}\ket{\bar{a}}_A\right)\Big],
\end{aligned}
\end{equation}
which shows that $\hat\rho'_{S|\bar{a}}$ is not diagonal in the basis $\{\ket{r_k}_S\}$. Therefore $\rho'_{S|\bar{a}} \neq \sum_k r'^{\bar{a}}_k \ket{r_k}\bra{r_k}_S$ with $r'^{\bar{a}}_k\geq r'^{\bar{a}}_{k+1}$. So, proceeding in a similar way as for case ${\it (i)}$, we conclude that
\begin{equation}
  \mathcal{W}_{\{\Pi'^A_a\}} -\mathcal{W} > 0.
\end{equation}
If $_A\!\bra{\bar{a}}\hat\rho_{SA}\ket{\bar{a}'}_A+_A\!\bra{\bar{a}'}\hat\rho_{SA}\ket{\bar{a}}_A = 0$, it is enough to consider $\hat\rho'_{S|\bar{a}'}$ instead of $\hat\rho'_{S|\bar{a}}$.
\end{enumerate}
\end{proof}

Having proven Theorem~\ref{th1}, we can provide a justification of two important Corollaries

\begin{corollary}
Under the premises of Theorem~\ref{th1}, $\delta \mathcal{W} = 0 \nRightarrow  \protect\overleftarrow{\mathcal{D}}_{SA}=0$.
\end{corollary}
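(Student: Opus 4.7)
The plan is to establish the corollary by exhibiting an explicit counterexample: a state $\hat\rho_{SA}$ with $\delta\mathcal{W}=0$ and yet $\DAtoS\neq 0$. Theorem~\ref{th1} already forces such a state to be classical on $S$, i.e.\ of the form $\hat\rho_{SA}=\sum_k p_k\ket{r_k}\bra{r_k}_S\otimes\hat\sigma^k_A$ with $\{\ket{r_k}\}$ the eigenbasis of $\hat\rho_S$. To simplify the passivity bookkeeping I would align $\{\ket{r_k}\}$ with the energy eigenbasis $\{\ket{\epsilon_k}\}$ of $\hat H_S$ and order $p_k\geq p_{k+1}$ when $\epsilon_k\leq\epsilon_{k+1}$, so that $\hat\rho_S$ is passive and $\mathcal{W}=0$.

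Within this ansatz I would then translate $\delta\mathcal{W}=0$ into an algebraic condition on the $\hat\sigma^k_A$. For any projective measurement $\{\hat\Pi^A_a\}$ on the ancilla, the conditional state $\hat\rho_{S|a}$ is automatically diagonal in the energy basis, with populations $r^a_k\propto p_k\Tr{\hat\Pi^A_a\hat\sigma^k_A}$. The ergotropy of $\hat\rho_{S|a}$ vanishes precisely when these populations remain sorted as the energies, and imposing this for \emph{every} rank-one projector $\hat\Pi^A_a$ collapses to the family of operator inequalities $p_k\hat\sigma^k_A\geq p_{k+1}\hat\sigma^{k+1}_A$. Whenever these are met, $\mathcal{W}_{\{\hat\Pi^A_a\}}=\mathcal{W}=0$ for every measurement on $A$, and thus $\delta\mathcal{W}=0$.

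It only remains to pick the $\hat\sigma^k_A$ non-commuting, so that the state is not classical on $A$ in any basis and $\DAtoS>0$. A minimal two-qubit instance is $\hat H_S=-\sigma_z$, $p_0=2/3$, $p_1=1/3$, together with $\hat\sigma^0_A=(\mathbb{I}+\sigma_x/\sqrt{5})/2$ and $\hat\sigma^1_A=(\mathbb{I}+\sigma_y/\sqrt{5})/2$. A short Bloch-sphere computation shows that $p_0\hat\sigma^0_A-p_1\hat\sigma^1_A$ has spectrum $\{0,1/3\}$, so the operator inequality is (just) saturated and $\delta\mathcal{W}=0$. At the same time $[\hat\sigma^0_A,\hat\sigma^1_A]\propto\sigma_z\neq 0$, and $\hat\rho_A=\mathbb{I}/2+(2\sigma_x+\sigma_y)/(6\sqrt{5})$ has a non-degenerate Bloch vector, so it admits a unique eigenbasis; checking that $\hat\rho_{SA}$ is not block-diagonal in that basis certifies non-classicality on $A$, whence $\DAtoS>0$.

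The main obstacle is the middle step: recognizing that the \emph{infinite} family of passivity constraints — one per measurement on $A$ — collapses to a finite set of operator inequalities among the $\hat\sigma^k_A$. Once this reduction is in hand, producing non-commuting $\hat\sigma^k_A$ that saturate those inequalities is a short Bloch-sphere exercise, and checking non-classicality on $A$ reduces to verifying non-block-diagonality in the single eigenbasis of $\hat\rho_A$.
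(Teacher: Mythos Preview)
Your proof is correct and rests on the same core idea as the paper's: exhibit a state that is classical on $S$ (so $\DStoA=0$, consistent with Theorem~\ref{th1}) whose ancilla components are non-commuting (so $\DAtoS>0$), yet arranged so that every conditional state $\hat\rho_{S|a}$ remains passive in the eigenbasis $\{\ket{r_k}\}$, forcing $\delta\mathcal{W}=0$. The paper executes this more economically by writing down the family $\hat\rho_{SA}=\sum_{k,a} q_{ak}\ket{r_k}\bra{r_k}_S\otimes\ket{\phi_a}\bra{\phi_a}_A$ with non-orthogonal $\{\ket{\phi_a}\}$ and $q_{ak}\geq q_{a,k+1}$; the monotonicity in $k$ immediately gives $r^b_k\propto\sum_a q_{ak}\abs{\braket{\phi_a}{b}}^2\geq r^b_{k+1}$ for any ancilla projector, and non-orthogonality gives $\DAtoS\neq 0$. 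Your route instead \emph{characterises} all classical-on-$S$ states with $\delta\mathcal{W}=0$ via the operator inequalities $p_k\hat\sigma^k_A\geq p_{k+1}\hat\sigma^{k+1}_A$ and then verifies a concrete two-qubit instance. This buys a cleaner structural picture (and indeed the paper's monotonicity condition is just a pure-state resolution of your operator inequality), at the cost of a longer argument where a single example would suffice. Note also that your non-commutativity observation $[\hat\sigma^0_A,\hat\sigma^1_A]\neq 0$ already certifies $\DAtoS>0$ on its own, so the eigenbasis-of-$\hat\rho_A$ detour is unnecessary.
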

\begin{proof}
It is enough to consider the state
\begin{equation}
\hat \rho_{SA}=\sum_{k,a} q_{ak} \ket{r_k}\bra{r_k}_S \otimes \ket{\phi_a}\bra{\phi_a}_A,
\end{equation}
where $\{\ket{\phi_a}_A\}$ is a non orthogonal set of states.
Under such conditions, we have
$\overleftarrow{\mathcal{D}}_{SA}\neq0$. If we choose $q_{ak}$
such that $q_{ak} \geq q_{ak+1}$, we have
$\mathcal{W}_{\{\hat\Pi^A_a\}} - \mathcal{W}=0$ for any set
$\{\hat\Pi^A_a\}$, as $\hat\rho_{S|a}=\sum_k r^a_k
\ket{r_k}\bra{r_k}_S$ with $r^a_k=\sum_{a'}q_{a'\,k}
\abs{\braket{\phi_{a'}}{a_A}}^2/p_a\geq r^a_{k+1}$).
\end{proof}

\begin{corollary}
\label{coroll2}
Under the premises of Theorem~\ref{th1}, we have that $\protect\overleftarrow{\mathcal{D}}_{SA}=0 \text{ or } \protect\overrightarrow{\mathcal{D}}_{SA}=0 \nRightarrow   \delta \mathcal{W} = 0$.
\end{corollary}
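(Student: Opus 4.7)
The plan is to prove Corollary~\ref{coroll2} by exhibiting an explicit counterexample. In fact, I will construct a classical-classical state $\hat\rho_{SA}$ that has vanishing discord in \emph{both} directions, yet has $\delta\mathcal{W}>0$. This implies the weaker disjunctive statement of the Corollary. The construction parallels the one used to prove the previous Corollary, but with the correlations between $S$ and $A$ aligned so that ancilla measurements reveal the microstate of a passive-looking $\hat\rho_S$.

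Concretely, take $S$ and $A$ to be qubits with $\hat H_S=-\sigma_z$, so that the eigenstates of $\hat H_S$ are $\ket{0}_S,\ket{1}_S$ with energies $\epsilon_0=-1,\epsilon_1=+1$, and consider
\begin{equation}
\hat\rho_{SA}=\tfrac{1}{2}\ket{0}\bra{0}_S\otimes\ket{0}\bra{0}_A+\tfrac{1}{2}\ket{1}\bra{1}_S\otimes\ket{1}\bra{1}_A.
\end{equation}
This is diagonal in a local product basis, hence $\overleftarrow{\mathcal{D}}_{SA}=\overrightarrow{\mathcal{D}}_{SA}=0$. First I would compute the marginal $\hat\rho_S=\mathrm{Tr}_A[\hat\rho_{SA}]=\tfrac{1}{2}\mathbbm{1}_S$, which is manifestly passive with respect to $\hat H_S$, so the bare ergotropy vanishes, $\mathcal{W}=0$. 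Next, I would choose the measurement set $\hat\Pi^A_a=\ket{a}\bra{a}_A$ for $a=0,1$, giving conditional states $\hat\rho_{S|0}=\ket{0}\bra{0}_S$ and $\hat\rho_{S|1}=\ket{1}\bra{1}_S$ with probabilities $p_0=p_1=\tfrac{1}{2}$. The conditional state $\hat\rho_{S|1}$ is not passive, and the cyclic unitary swapping $\ket{0}_S\leftrightarrow\ket{1}_S$ extracts work $\epsilon_1-\epsilon_0=2$; no work can be extracted from $\hat\rho_{S|0}$. Averaging gives $\mathcal{W}_{\{\hat\Pi^A_a\}}=1$, and therefore
\begin{equation}
\delta\mathcal{W}\geq\mathcal{W}_{\{\hat\Pi^A_a\}}-\mathcal{W}=1>0,
\end{equation}
which contradicts the putative implication and completes the counterexample.

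The only step requiring care is the double-discord verification: since $\hat\rho_{SA}$ is already diagonal in the product basis $\{\ket{jk}_{SA}\}$, choosing measurement projectors along this basis (on either side) saturates the corresponding one-way classical information, yielding $\overrightarrow{\mathcal{J}}_{SA}=\overleftarrow{\mathcal{J}}_{SA}=\mathcal{I}_{SA}$ and hence zero discord in both directions. I expect no real obstacle here: the construction is elementary, and the whole force of the argument lies in separating the two effects---the marginal $\hat\rho_S$ being passive (so $\mathcal{W}=0$) while the classical correlations with $A$ permit perfect conditional state identification (so $\mathcal{W}_{\{\hat\Pi^A_a\}}>0$).
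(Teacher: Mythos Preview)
Your proposal is correct and follows essentially the same approach as the paper: the paper also constructs a classical-classical state $\hat\rho_{SA}=\sum_k r_k\,\hat\Pi^S_k\otimes\hat\Pi^A_k$ (zero discord in both directions) and shows $\mathcal{W}_{\{\hat\Pi^A_k\}}-\mathcal{W}=\sum_k r_k\epsilon_k-\epsilon_1>0$. Your example is simply the concrete two-qubit instance $r_0=r_1=\tfrac{1}{2}$ of that family, with the computations spelled out explicitly.
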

\begin{proof}
We consider the state $\hat\rho_{SA} = \sum_k r_k \hat\Pi^S_k \otimes \hat\Pi^A_k$, where $\hat\Pi^A_k$ and $\hat\Pi^S_k$ are orthogonal projectors of rank one. Although such state has zero discord,  the quantity $\mathcal{W}_{\{\Pi^A_k\}} - \mathcal{W}$ is positive since
\begin{equation}
  \mathcal{W}_{\{\Pi^A_k\}} - \mathcal{W} = \sum_k r_k \epsilon_k - \epsilon_1 >0.
\end{equation}
Therefore, $\delta \mathcal{W} >0$.
\end{proof}

\noindent
{\it Theorem 2.--} We can now provide a proof of Theorem~\ref{th2}, which we state again for easiness of consultation.

\noindent
{\bf Theorem 2.} {\it For any system $S$ and ancilla $A$ prepared in a pure state $\hat \rho_{SA}=\ket{\psi}\bra{\psi}_{SA}$ we have
\begin{equation}
\delta{\cal W}=0\Leftrightarrow\ket{\psi}_{SA}~\text{is separable},
\end{equation}
and $\delta \mathcal{W} = \sum_k r_k \epsilon_k - \epsilon_1$, where $r_k$ are the Schmidt coefficients of $\ket{\psi}_{SA}$ and $\epsilon_k$ are the eigenvalues of $\hat H_S$, ordered such that $r_k\geq r_{k+1}$ and $\epsilon_k\leq \epsilon_{k+1}$.}
%Under the premises of Theorem~\ref{th1}, we have that $\delta \mathcal{W} = \sum_k r_k \epsilon_k - \epsilon_1$.
\begin{proof}
We make use of the instrumental result embodied by Corollary~\ref{coroll2} and consider the pure state $\hat\rho_{SA}=\ket{\psi_{SA}}\bra{\psi_{SA}}$ whose Schmidt decomposition reads $\ket{\psi_{SA}} = \sum_k \sqrt{r_k} \ket{r_k}_S\otimes\ket{\phi_k}_A$ with $r_k\geq r_{k+1}$. Corollary~\ref{coroll2} has shown that $\delta \mathcal{W} = \sum_k r_k \epsilon_k - \epsilon_1$. Therefore, $\delta{\cal W}=0$ it must be $\epsilon_1=\sum_k r_k \epsilon_k$, which implies $r_k=\delta_{1k}$. This implies that the state has a single Schmidt coefficient, and is thus separable. The proof of the reverse statement is trivial.

%Then, for the choice $\hat\Pi^A_a=\ket{\phi_a}\bra{\phi_a}_A$ the work $\mathcal{W}_{\{\Pi^A_a\}}$ is the biggest possible, as the final state of the protocol is always the ground state with energy $\epsilon_1$, and thus the statement of the Corollary follows.
\end{proof}

\noindent
{\it Analysis of the two-qubit case.--} We provide additional details on the analysis performed on the two-qubit case illustrated in the main body of the paper.

In what follows, with no loss of generality, we choose the system Hamiltonian $\hat H_S=-\hat\sigma_z$. As stated in the main body of the paper, we choose concurrence as the entanglement measure to be used in our analysis. %As $\delta \mathcal{W}=0$ implies that there is no entanglement in the state $\hat\rho_{SA}$ [cf. Theorem~\ref{th2}], it is interesting to compare $\delta \mathcal{W}$ with the concurrence $\mathcal{C}[\rho_{SA}]$, defined as
For a bipartite qubit state, concurrence is defined as~\cite{horo}
\begin{equation}
  \mathcal{C} = \text{max}[0,\lambda_1-\sum_{j>1}\lambda_j],
\end{equation}
where $\lambda_k$ are the square roots of the eigenvalues of $\hat
\rho \hat{\tilde\rho}$ with $\hat{\tilde\rho}=(\hat
\sigma_y\otimes\hat\sigma_y)\hat\rho^*(\hat\sigma_y\otimes\hat\sigma_y)$,
ordered so that $\lambda_k\geq \lambda_{k+1}$. In the main body of
the paper we have proven that the ergotropic gain of any state
$\hat\rho_{SA}$ with concurrence $\mathcal{C}$ is larger than, or
equal to
\begin{equation}
\label{eq dmin}
\delta \mathcal{W}_\text{min}(\mathcal{C})=1-\sqrt{1-{\cal C}^2}.
\end{equation}
%
%\begin{equation}\label{eq dmin}
%  \delta \mathcal{W}_\text{min}(\mathcal{C}) = 1-\sqrt{1-\mathcal{C}^2}
%\end{equation}
%
%We have shown that, for every state $\rho_{SA}$ with a concurrence $\mathcal{C}$, $\delta \mathcal{W}\geq\delta \mathcal{W}_\text{min}(\mathcal{C})$ where
%
%\begin{equation}\label{eq dmin}
%  \delta \mathcal{W}_\text{min}(\mathcal{C}) = 1-\sqrt{1-\mathcal{C}^2}
%\end{equation}

%This means, for instance, that we need at least a certain amount of entanglement in the state $\rho_{SA}$, quantified by $\mathcal{C}$, in order to be sure to have a gain $\delta \mathcal{W}$ bigger or equal to $\delta \mathcal{W}_\text{min}(\mathcal{C})$.
The states locally equivalent to
\begin{equation} \label{state1}
  \hat\rho_{SA}=\left(
              \begin{array}{cccc}
                0 & 0 & 0 & 0 \\
                0 & x & \mathcal{C}/2 & 0 \\
                0 & \mathcal{C}/2 & 1-x & 0 \\
                0 & 0 & 0 & 0 \\
              \end{array}
            \right)
\end{equation}
with $x=(1\pm\sqrt{1-\mathcal{C}^2})/2$, which have concurrence $\mathcal{C}$, are such that $\delta \mathcal{W}=\delta \mathcal{W}_\text{min}(\mathcal{C})$. These states belong to the class parametrized as $p \ket{\phi_+^\eta}\bra{\phi_+^\eta}+\frac{1-p}{2}(\ket{01}\bra{01}+\ket{10}\bra{10})$ where $\ket{\phi_+^\eta}=\sqrt{\eta}\ket{01}+\sqrt{1-\eta}\ket{10}$. On the other hand, the states locally equivalent to
\begin{equation} \label{state2}
\hat  \rho_{SA}=\left(
              \begin{array}{cccc}
                1/2 & 0 & 0 & \mathcal{C}/2 \\
                0 & 0 & 0 & 0 \\
                0 & 0 & 0 & 0 \\
                \mathcal{C}/2 & 0 & 0 & 1/2 \\
              \end{array}
            \right),
\end{equation}
which also have concurrence $\mathcal{C}$, are such that $\delta
\mathcal{W}=1$, and thus embody the upper bound to the daemonic
ergotropy at set value of concurrence.

In order to show this, we parameterize the projectors
$\hat\Pi^A_1$ and $\hat\Pi^A_2$ that are needed to calculate the
daemonic ergotropy in terms of the angles $\theta \in [0,\pi]$ and
$\phi\in[0,2\pi)$ such that
\begin{equation*}
  \Pi^A_1 = \left(
              \begin{array}{cc}
                \cos^2\left({\theta}/{2}\right) & e^{-i\phi}{\sin(\theta}/{2}) \\
                e^{i\phi}{\sin(\theta}/{2}) & \sin^2\left({\theta}/{2}\right)  \\
              \end{array}
            \right)
\end{equation*}
and $\Pi^A_2=\openone-\Pi^A_1$. An extensive numerical analysis of
the distribution itself has shown that the states lying on the
lower boundary belong to the class of so-called {\it $x$-states}
of the form
\begin{equation}
  \rho_{SA} = \left(
                \begin{array}{cccc}
                  a & 0 & 0 & z \\
                  0 & b & w & 0 \\
                  0 & w & c & 0 \\
                  z & 0 & 0 & 1-a-b-c \\
                \end{array}
              \right),
\end{equation}
where $a,b,c,w,z$ are positive numbers such that $bc\geq w^2$,
$ad\geq z^2$. This class plays a key role in the characterisation
of the states that maximize quantum correlations at set values of
the purity of a given bipartite qubit state~\cite{girolami,MEMS}.
The ergotropy $\mathcal{W}$ for such class of states is
\begin{equation}
  \mathcal{W} = %\bigg\{ \begin{array}{c}
  \begin{cases}
                  0 &\text{ for } a+b \geq \frac{1}{2}, \\
                  2-4(a+b) &\text{ otherwise}.
%                \end{array}
\end{cases}
\end{equation}
On the other hand, we have $\mathcal{W}_{\{\Pi^A_a\}} = 1-2(a+b)+({X_++X_-})/{2}$ with
\begin{equation}
\begin{aligned}
% \nonumber % Remove numbering (before each equation)
  X_\pm &=\left\{\left[2(a+b)-1\pm(1-2b-2c)\cos\theta\right]^2\right. \\
   &\left.+4\left[we^{-i\phi}+ze^{i\phi}\right|^2\sin^2\theta\right\}^{\frac{1}{2}}.
\end{aligned}
\end{equation}
The associated concurrence is $\mathcal{C} =
2\text{max}\{0,z-\sqrt{bc},w-\sqrt{ad}\}$. We make the ansatz that
a state as in Eq.~\eqref{state1} with $x$ real and positive,
minimizes $\delta \mathcal{W}$ at a fixed value of $\mathcal{C}$.
Then, from the positivity of the density matrix, $x$ must satisfy
the condition $\mathcal{C}\leq 2 \sqrt{x(1-x)}$ with $x\in[0,1]$.

For such state, we have
$\delta\mathcal{W}=2-2x-\text{max}\{0,2-4x\}$. If we consider
$x\geq1/2$, then $\delta\mathcal{W}=2-2x$, which is minimum when
$x$ is maximum, i.e. for $x=(1+\sqrt{1-\mathcal{C}^2})/2$. For
$x\leq1/2$ we have $\delta\mathcal{W}=2x$, which is minimum when
$x$ is minimum, i.e. for $x=(1-\sqrt{1-\mathcal{C}^2})/2$. In both
cases, $\delta \mathcal{W}$ takes the expression in Eq.~\eqref{eq
dmin}.

In order to show that the class in Eq.~\eqref{state2} is such that
$\delta \mathcal{W}=1$, it is enough to observe that, for such
state, $\mathcal{W}=0$. In fact, we trivially have $\rho_S =
\mathbf{1}/2$ and, by choosing for instance $\hat\Pi^A_1 =
\ket{0}\bra{0}_A$, we get pure post-measurement states, and thus
$\mathcal{W}_{\{\hat\Pi^A_a\}} = 1$. Therefore $\delta
\mathcal{W}=1$ regardless of the value taken by $\mathcal{C}$.

As mentioned above, the validity of the ansatz used here is
justified by an extensive numerical investigation based on $10^6$
random bipartite states generated uniformly according to the Haar
measure.

\end{document}